\newtheorem{theorem}{Theorem}
\newtheorem{proposition}{Proposition}
\newtheorem{lemma}{Lemma}
\newtheorem{definition}{Definition}
\newtheorem{example}{Example}
\newcommand{\lpmln}{LP\textsuperscript{MLN}{ }}
\newcommand{\lpmlnend}{LP\textsuperscript{MLN}}
\newcommand{\aspwc}{ASP\textsuperscript{wc}{ }}
\newcommand{\aspwcend}{ASP\textsuperscript{wc}}
\newcommand{\lglred}[2]{\left(\overline{{#1}_{#2}}\right)^{#2}}
\title{On the Strong Equivalences of \lpmln Programs}
\author{
	Bin Wang \qquad Jun Shen \qquad Shutao Zhang \qquad Zhizheng Zhang
	\institute{School of Computer Science and Engineering
		\\Southeast University, Nanjing, China\\}
	\email{\{kse.wang, junshen, shutao\_zhang, seu\_zzz\}@seu.edu.cn}
}
\begin{document}
\maketitle


\begin{abstract}
By incorporating the methods of Answer Set Programming (ASP) and Markov Logic Networks (MLN), \lpmln becomes a powerful tool for non-monotonic, inconsistent and uncertain knowledge representation and reasoning. 
To facilitate the applications and extend the understandings of \lpmlnend, 
we investigate the strong equivalences between \lpmln programs in this paper, 
which is regarded as an important property in the field of logic programming. 
In the field of ASP, two programs $P$ and $Q$ are strongly equivalent, iff for any ASP program $R$, the programs $P \cup R$ and $Q \cup R$ have the same stable models. 
In other words, an ASP program can be replaced by one of its strong equivalent without considering its context, 
which helps us to simplify logic programs, enhance inference engines, 
construct human-friendly knowledge bases etc. 
Since \lpmln is a combination of ASP and MLN, the notions of strong equivalences in \lpmln is quite different from that in ASP. 
Firstly, we present the notions of p-strong and w-strong equivalences between \lpmln programs. 
Secondly, we present a characterization of the notions by generalizing the SE-model approach in ASP. 
Finally, we show the use of strong equivalences in simplifying \lpmln programs, and present a sufficient and necessary syntactic condition that guarantees the strong equivalence between a single \lpmln rule and the empty program.  
\end{abstract}



\section{Introduction}
\lpmln \cite{Lee2016Weighted}, a newly developed knowledge representation and reasoning language, is designed to handle non-monotonic and uncertain knowledge by combining the methods of Answer Set Programming (ASP) \cite{Brewka2011ASP,Gelfond1988theSM} and Markov Logic Networks (MLN) \cite{Richardson2006mln}. 
Specifically, an \lpmln program can be viewed as a weighted ASP program, 
where each ASP rule is assigned a weight denoting its certainty degree,  
and each weighted rule is allowed to be violated by a set of beliefs associated with the program. 
For example, \lpmln rule ``$w ~:~ \leftarrow a, ~b. $'' is a weighted constraint denoting facts $a$ and $b$ are contrary, $w$ is the weight of the constraint. 
In the view of ASP, the set $X=\{a, ~b\}$ is impossible to be a belief set of any ASP programs containing the constraint, 
while in the context of \lpmlnend, $X$ is a valid belief set. 
Since $X$ violates the constraint, the weight $-w$ is regarded as the certainty degree of $X$. 
It is easy to observe that the example can also be encoded by weak constraints in ASP. 
From this perspective, \lpmln can be viewed as an extension of ASP with weak constraints, that is, ASP with weak rules. 
Besides, several inference tasks are introduced to \lpmln such as computing marginal probability distribution of beliefs, computing most probable belief sets etc., 
which makes \lpmln suitable for knowledge reasoning in the context that contains uncertain and inconsistent data. 
For example, Eiter and Kaminski \cite{Eiter2016Exploiting} used \lpmln in the tasks of classifying visual objects, 
and some unpublished work tried to use \lpmln as the bridge between text and logical knowledge bases. 

Recent results on \lpmlnend aim to establish the relationships among \lpmln and other logic formalisms \cite{Balai2016realtionship,Lee2017lpmln}, 
develop \lpmln solvers \cite{Lee2017ComputingLpmln,Wang2017ParallelLpmln,Wu2018LPMLNModels}, 
acquire the weights of rules automatically \cite{Lee2018WeightLearning}, 
explore the properties of \lpmln \cite{Wang2018Splitting} etc. 
All these results lay the foundation for the problems solving via \lpmlnend, 
however, many theoretical problems of \lpmln are still unsolved, 
which prevents the wider applications of \lpmlnend. 
In this paper, we investigate the strong equivalences between \lpmln programs, 
which is regarded as an important property in the field of logic programming. 
For two ASP programs $P$ and $Q$, they are strongly equivalent, iff for any ASP program $R$, the programs $P \cup R$ and $Q \cup R$ have the same stable models \cite{Lifschitz2001Strongly}. 
In other words, an ASP program can be replaced by one of its strong equivalent without considering its context, 
which helps us to simplify logic programs, enhance inference engines, 
construct human-friendly knowledge bases etc. 
For example, an ASP rule such that its positive and negative body have common atoms is strongly equivalent to $\emptyset$ \cite{Inoue2004EqUpdate,Lin2005Discover,Osorio2001Equivalence}, 
therefore, such kinds of rules can be eliminated from any context, which leads to a more concise knowledge base and makes the reasoning easier. 
By investigating the strong equivalences in \lpmlnend, 
it is expected to improve the knowledge base constructing and knowledge reasoning in \lpmlnend, 
furthermore,  
help us to facilitate the applications and extend the understandings of \lpmlnend.

Our contributions are as follows. 
Firstly, we define the notions of strong equivalences in \lpmlnend, that is, the p-strong and w-strong equivalences. 
As we showed in above example, a stable model defined in \lpmln is associated with a certainty degree, 
therefore, the notions of strong equivalences in \lpmln are also relevant to the certainty degree. 
Secondly, we present a model-theoretical approach to characterizing the defined notions, which can be viewed as a generalization of the strong-equivalence models (SE-model) approach in ASP \cite{Turner2001SE}. 
Finally, we show the use of the strong equivalences in simplifying \lpmln programs, 
and present a sufficient and necessary syntactic condition that guarantees the strong equivalences between a single \lpmln rule and the empty program.


\section{Preliminaries}
In this section, we review the knowledge representation and reasoning language \lpmln presented in \cite{Lee2016Weighted}. 
An \lpmln program is a finite set of weighted rules $w:r$, 
where $w$ is the weight of rule $r$, 
and $r$ is an ASP rule of the form
\begin{equation}
\label{eq:asp-rule-form}
l_1 ~ \vee ~ ... ~\vee ~ l_k ~\leftarrow~ l_{k+1}, ..., ~l_m, ~not~ l_{m+1}, ...,~not ~ l_n.
\end{equation}
where $l$s are literals, 
$\vee$ is epistemic disjunction, 
and $not$ is default negation. 
The weight $w$ of an \lpmln rule is either a real number or a symbol ``$\alpha$'' denoting ``infinite weight'', 
and if $w$ is a real number, the rule is called \textit{soft}, otherwise, it is called \textit{hard}. 
For convenient description, we introduce some notations.
By $\overline{M}$ we denote the set of unweighted ASP counterpart of an \lpmln program $M$, 
i.e. $\overline{M} = \{r ~|~ w:r \in M \}$. 
For an ASP rule $r$ of the form \eqref{eq:asp-rule-form}, 
the literals occurred in head, positive body, and negative body of $r$ are denoted by $h(r) = \{l_i ~|~ 1 \leq i \leq k\}$, 
$b^+(r)=\{l_i ~|~ k+1 \leq i \leq m\}$, 
and $b^-(r)=\{l_i ~|~ m+1 \leq i \leq n\}$ respectively. 
Therefore, an ASP rule $r$ of the form (\ref{eq:asp-rule-form}) can also be abbreviated as ``$h(r) \leftarrow b^+(r), ~not ~ b^-(r).$".  
By $lit(r) = h(r) \cup b^+(r) \cup b^-(r)$ we denote the set of literals occurred in rule $r$, and by $lit(\Pi) = \bigcup_{r \in \Pi} lit(r)$ we denote the set of literals occurred in an ASP program $\Pi$.

An \lpmln program is called \textit{ground} if its rules contain no variables. 
Usually, a non-ground \lpmln program is considered as a shorthand for the corresponding ground program, 
therefore, we limited our attention to the strong equivalences between ground \lpmln programs in this paper. 
For a ground \lpmln program $M$, we use $W(M)$ to denote the weight degree of  $M$, i.e. $W(M) = exp\left(\sum_{w:r \in M } w\right)$.  
A ground \lpmln  rule $w:r$ is satisfied by a consistent set $X$ of ground literals, 
denoted by $X \models w:r$, if $X\models r$ by the notion of satisfiability in ASP. 
An \lpmln program $M$ is satisfied by $X$, denoted by $X \models M$, if $X$ satisfies all rules in $M$.
By $M_X$ we denote the \textit{\lpmln reduct} of an \lpmln program $M$ w.r.t. $X$, i.e. $M_X=\{w:r \in M ~|~ X \models w:r\}$. 
A consistent set $X$ of literals is a stable model of an ASP program $P$, if $X$ satisfies all rules in $P^X$ and $X$ is minimal in the sense of set inclusion, where $P^X$ is the Gelfond-Lifschitz reduct (GL-reduct) of $P$ w.r.t. $X$, i.e. $P^X = \{ h(r) \leftarrow b^+(r).  ~|~  r \in P \text{ and } b^-(r) \cap X = \emptyset  \}$.
The set $X$ is a \textit{stable model} of an \lpmln program $M$ if $X$ is a stable model of the ASP program $\overline{M_X}$. 
And by $SM(M)$ we denote the set of all stable models of an \lpmln program $M$. 
For a stable model $X$ of an \lpmln program $M$, 
the \textit{weight degree} $W(M,X)$ of $X$ w.r.t. $M$ is defined as $W(M_X)$, 
and the \textit{probability degree} $P(M,X)$ of $X$ w.r.t. $M$ is defined as
\begin{equation}
\label{eq:probability-sm}
P(M,X) = \lim\limits_{\alpha \rightarrow \infty} \frac{W(M,X)}{\Sigma_{X'\in SM(M)}W(M,X')}
\end{equation}
For a literal $l$, the \textit{probability degree} $P(M,l)$ of $l$ w.r.t. $M$ is defined as 
\begin{equation}
\label{eq:probability-lit}
P(M,l) = \sum_{l \in X, ~X \in SM(M)} P(M,X)
\end{equation}
A stable model $X$ of an \lpmln program $M$ is called a \textit{probabilistic stable model} of $M$ if $P(M,X) \neq 0$.
By $PSM(M)$ we denote the set of all probabilistic stable models of $M$. 
It is easy to check that $X$ is a probabilistic stable model of $M$, iff $X$ is stable model of $M$ that satisfies the most hard rules.
Based on above definitions, there are two kinds of main inference tasks for an \lpmln program $M$ \cite{Lee2017ComputingLpmln}:  
\begin{itemize}
	\item[-] Maximum A Posteriori (MAP) inference: compute the stable models with the highest weight or probability degree of the program $M$, i.e. the most probable stable model;
	\item[-] Marginal Probability Distribution (MPD) inference: compute the probability degrees of a set of literals w.r.t. the program $M$.
\end{itemize}


\section{Strong Equivalences for \lpmlnend}
In this section, we investigate the strong equivalences in \lpmlnend. 
Firstly, we define the notions of strong equivalences based on two different certainty degrees in \lpmlnend. 
Secondly, we present a model-theoretical approach to characterizing the notions. 
Finally, we present the relationships among these notions.

\subsection{Notions of Strong Equivalences}
The notion of strong equivalence is built on the notion of ordinary equivalence, 
in this section, we define two notions of ordinary equivalences between \lpmln programs, which is relevant to the weight and probability defined for stable models in \lpmlnend. 

\begin{definition}[w-ordinary equivalence]
\label{def:lpmln-ordinary-equivalence-w}
Two \lpmln programs $L$ and $M$ are w-ordinarily equivalent, denoted by $L \equiv_w M$, 
if their stable models coincide, and for each stable model $X$ of the programs, $W(L,X) = W(M,X)$. 
\end{definition}

\begin{definition}[p-ordinary equivalence]
\label{def:lpmln-ordinary-equivalence-p}
Two \lpmln programs $L$ and $M$ are p-ordinarily equivalent, denoted by $L \equiv_p M$, 
if their stable models coincide, and for each stable model $X$ of the programs,  $P(L, X) = P(M, X)$. 
\end{definition}

From Definition \ref{def:lpmln-ordinary-equivalence-w} and Definition \ref{def:lpmln-ordinary-equivalence-p}, 
it can be observed that both of the w-ordinary and p-ordinary equivalences can guarantee two \lpmln programs have the same MAP and MPD inference results, 
and the p-ordinary equivalence is a little weaker, i.e. if two \lpmln programs are p-ordinarily equivalent, then they are w-ordinarily equivalent, but the inverse dose not hold generally. 
Based on the definitions of ordinary equivalences, we can define two kinds of strong equivalences between \lpmln programs.  

\begin{definition}[strong equivalences for \lpmlnend]
\label{def:lpmln-strong-equivalences}
For two \lpmln programs $L$ and $M$, 
\begin{itemize}
	\item[-] they are w-strongly equivalent, denoted by $L \equiv_{s,w} M$, 
	if for any \lpmln program $N$, $L \cup N \equiv_{w} M \cup N$; 
	\item[-] they are p-strongly equivalent, denoted by $L \equiv_{s,p} M$, 
	if for any \lpmln program $N$, $L \cup N \equiv_{p} M \cup N$.
\end{itemize}
\end{definition}

The notions of w-strong and p-strong equivalences can guarantee the faithful replacement of an \lpmln program in any context. 
Here, we introduce a new notion of strong equivalence, semi-strong equivalence, that does not guarantee the faithful replacement, 
but helps us to simplify the characterizations of other strong equivalences. 

\begin{definition}[semi-strong equivalence]
\label{def:lpmln-strong-equivalence-semi}
Two \lpmln programs $L$ and $M$ are semi-strongly equivalent, denoted by $L \equiv_{s,s} M$, 
if for any \lpmln program $N$, the programs $L \cup N$ and $M \cup N$ have the same stable models.
\end{definition}

\subsection{Characterizations of Strong Equivalences}
In this section, we present the characterizations for w-strong and p-strong equivalences. 
From Definition \ref{def:lpmln-strong-equivalences} and Definition \ref{def:lpmln-strong-equivalence-semi}, 
the notions of w-strong and p-strong equivalences can be viewed as the strengthened semi-strong equivalence by introducing the certainty evaluations. 
Therefore, we present the characterization of semi-strong equivalence firstly, which severs as the basis of characterizing w-strong and p-strong equivalences.

\subsubsection{Characterizing Semi-Strong Equivalence}
Here, we characterize the semi-strong equivalence between \lpmln programs by generalizing the strong-equivalence models (SE-models) approach presented in \cite{Turner2001SE}. 
For the convenient description, we introduce following notions. 

\begin{definition}[SE-interpretation]
	A strong equivalence interpretation (SE-interpretation) is a pair of consistent sets of literals $(X, Y)$ such that $X \subseteq Y$. 
	An SE-interpretation $(X,Y)$ is called \textit{total} if $X = Y$, and \textit{non-total} if $X \subset Y$.
\end{definition}

\begin{definition}[SE-models for \lpmlnend]
	\label{def:lse-model}
	For an \lpmln program $M$, an SE-interpretation $(X,Y)$ is an SE-model of $M$, if $X \models M'$ and $Y \models M'$, where $M' = \lglred{M}{Y}$.
\end{definition}

In Definition \ref{def:lse-model}, $M'$ is an ASP program obtained from $M$ by a three-step transformation. 
In the first step, $M_Y$ is obtained from $M$ by removing all rules that cannot be satisfied by $Y$, which is the \lpmln reduct of $M$ w.r.t. $Y$. 
In the second step, $\overline{M_Y}$ is obtained by dropping weight of each rule in $M_Y$.
In the third step, $(\overline{M_Y})^Y$ is obtained by the GL-reduct. 
Clearly, an SE-model for the \lpmln program $M$ is an SE-model of a consistent unweighted subset of $M$ that is obtained by \lpmln reduct, 
which means the definition of SE-models for \lpmln programs is built on the definition of SE-models for ASP programs. 
In what follows, we use $LSE(M)$ to denote the set of all SE-models of an \lpmln program $M$. 

\begin{definition}
	\label{def:lse-weight-degree}
	For an \lpmln program $M$ and an SE-model $(X,Y)$ of $M$, the weight degree $W(M,(X,Y))$ of $(X,Y)$ w.r.t. the program $M$ is defined as 
	\begin{equation}
	\label{eq:lse-model-weight-degree}
	W(M,(X,Y)) = W(M_Y) = exp \left( \sum_{w:r \in M_Y} w \right)
	\end{equation}
\end{definition}

\begin{example}
	\label{ex:se-model}
	Consider an \lpmln program $L = \{\alpha :   a \vee b.  ~~ 1 : b \leftarrow  not ~a. \}$. For the set $X = \{a, b\}$, it is easy to check that $X \models L$, therefore, the \lpmln reduct $L_X$ is $L$ itself. 
	By the definitions of GL-reduct, $(\overline{L})^X = \{a \vee b.\}$, 
	therefore, both $S_1 = (\{a\}, X)$ and $S_2 = (\{b\}, X)$ are SE-models of $L$, 
	and $W(L, S_1) = W(L, S_2) = W(L) = e^{\alpha + 1}$.
\end{example}

Now, we show some useful properties of the SE-models for \lpmln programs. 
Proposition \ref{prop:XX-lse-model-LM} is an immediate result according to the definition of SE-models. 
\begin{proposition}
	\label{prop:XX-lse-model-LM}
	Let $M$ be an \lpmln program and $(X, Y)$ an SE-interpretation, 
	\begin{itemize}
		\item[-] if $X=Y$, then $(X,Y)$ is an SE-model of $M$; 
		\item[-] $(X,Y)$ is not an SE-model of $M$, iff $X \not\models \lglred{M}{Y}$.
	\end{itemize}
\end{proposition}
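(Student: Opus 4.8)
The plan is to obtain both bullet points from a single, unconditional observation: \emph{every consistent set of literals satisfies its own associated reduct}. Concretely, I would first establish that for any consistent set $Z$ of literals, $Z \models (\overline{M_Z})^Z$ holds with no side conditions. Once this is available, the first bullet is the special case $Z = X = Y$, and the second bullet follows by noting that the ``upper'' satisfaction requirement in the definition of an SE-model is always met.

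To prove the observation, I would take an arbitrary rule $r'$ of the GL-reduct $(\overline{M_Z})^Z$. By the construction of the three-step transformation, $r'$ has the form $h(r) \leftarrow b^+(r)$ for some rule $r$ with $w:r \in M_Z$ and $b^-(r) \cap Z = \emptyset$. Now $w:r \in M_Z$ means precisely $Z \models r$, by the definition of the \lpmln reduct. Assuming the body $b^+(r) \subseteq Z$, both body conditions of the original rule $r$ are met (the positive one by assumption, the negative one because $b^-(r) \cap Z = \emptyset$), so $Z \models r$ forces $h(r) \cap Z \neq \emptyset$; hence $Z \models r'$. Since $r'$ was arbitrary, $Z \models (\overline{M_Z})^Z$.

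The first bullet then follows immediately by instantiating $Z = X = Y$: writing $M' = \lglred{M}{Y}$, the SE-model condition ``$X \models M'$ and $Y \models M'$'' collapses to the single already-established fact $X \models (\overline{M_X})^X$, so $(X,X)$ is an SE-model of $M$. For the second bullet, I would instantiate the observation at $Z = Y$ to get $Y \models M'$ unconditionally, so the requirement ``$X \models M'$ and $Y \models M'$'' reduces to just ``$X \models M'$''. Negating this equivalence yields that $(X,Y)$ fails to be an SE-model exactly when $X \not\models \lglred{M}{Y}$, as claimed. I do not expect a genuine obstacle; the only care needed is to track the three-step transformation (\lpmln reduct, weight deletion, GL-reduct) so that the body conditions of $r$ line up correctly with those of $r'$, and to note that the weights are irrelevant since both satisfaction and the reduct are defined purely on the unweighted counterpart $\overline{M}$.
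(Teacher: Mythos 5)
Your proof is correct: the key observation that every consistent set $Z$ satisfies $\lglred{M}{Z}$ (because each rule surviving both the \lpmln reduct and the GL-reduct comes from a rule that $Z$ satisfies and whose negative body $Z$ avoids) is exactly the right reason, and both bullets follow from it as you describe. The paper itself offers no written proof — it calls the proposition ``an immediate result according to the definition of SE-models'' — and your argument is precisely the natural definitional unwinding that this remark presupposes, so you have taken essentially the same (implicit) approach, just spelled out.
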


Proposition \ref{prop:x-sm-mn} shows the relationships between the SE-models and the stable models of an \lpmln program. 

\begin{proposition}
	\label{prop:x-sm-mn}
	For an \lpmln program $M$ and a total SE-model $(X,X)$ of $M$, 
	\begin{itemize}
		\item[-] there must be an \lpmln program $N$ such that $X$ is a stable model of $M \cup N$, for example, $N = \{w : a . ~|~ a \in X\}$; 
		\item[-] $X$ is a stable model of $M$, iff  $(X',X) \not\in LSE(M)$  for any proper subset $X'$ of $X$. 
	\end{itemize} 
\end{proposition}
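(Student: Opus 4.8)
The plan is to route both bullets through a single auxiliary observation: for every consistent set of literals $X$ one always has $X \models \lglred{M}{X}$. This is essentially the content behind the first bullet of Proposition \ref{prop:XX-lse-model-LM}. Concretely, every rule $r$ surviving in $\overline{M_X}$ is satisfied by $X$; if such an $r$ also survives the GL-reduct then $b^-(r) \cap X = \emptyset$, so from $X \models r$ we get $h(r) \cap X \neq \emptyset$ whenever $b^+(r) \subseteq X$, which is exactly what it means for $X$ to satisfy the reduct rule $h(r) \leftarrow b^+(r)$. Hence $X$ satisfies every rule of $\lglred{M}{X}$ irrespective of any stability condition, and I would record this fact explicitly before treating the two bullets.

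For the first bullet I would take $N = \{w : a . \mid a \in X\}$ and compute the reduct. Since each fact $a$ with $a \in X$ is satisfied by $X$, and the rules of $M$ kept by $X$ are precisely $M_X$, one obtains $(M \cup N)_X = M_X \cup N$, hence $\overline{(M \cup N)_X} = \overline{M_X} \cup \{a. \mid a \in X\}$ and $\lglred{(M \cup N)}{X} = \lglred{M}{X} \cup \{a. \mid a \in X\}$. By the auxiliary fact, $X$ satisfies this GL-reduct. For minimality, any $Z \subseteq X$ satisfying the reduct must contain every $a \in X$ because of the added facts, forcing $Z = X$; therefore $X$ is a minimal model of $\lglred{(M \cup N)}{X}$ and thus a stable model of $M \cup N$.

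For the second bullet I would unwind the definition of stable model for \lpmlnend: $X \in SM(M)$ iff $X$ is a stable model of the ASP program $\overline{M_X}$, i.e. iff $X \models \lglred{M}{X}$ and no proper subset of $X$ satisfies $\lglred{M}{X}$. The first conjunct holds automatically by the auxiliary fact, so stability is equivalent to the assertion that no $X' \subsetneq X$ satisfies $\lglred{M}{X}$. I would then match this with the SE-models: for $X' \subsetneq X$ the pair $(X',X)$ is non-total, and since $X \models \lglred{M}{X}$ always holds, the second bullet of Proposition \ref{prop:XX-lse-model-LM} yields $(X',X) \in LSE(M)$ exactly when $X' \models \lglred{M}{X}$. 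Combining these, ``$(X',X) \notin LSE(M)$ for every proper subset $X'$'' is equivalent to ``no proper subset of $X$ satisfies $\lglred{M}{X}$'', which is equivalent to $X \in SM(M)$.

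The computations are routine; the step requiring the most care is the bookkeeping of the reduct under the union $M \cup N$ --- namely verifying that appending the facts alters neither which rules of $M$ survive the \lpmln reduct nor the GL-reduct of those surviving rules --- together with the precise alignment between ``$X$ is a minimal model of the positive program $\lglred{M}{X}$'' and ``there is no non-total SE-model of $M$ of the form $(X',X)$''. Once the auxiliary fact $X \models \lglred{M}{X}$ is isolated, no genuinely hard obstacle remains, and both bullets follow by direct unwinding of the definitions.
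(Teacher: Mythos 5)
Your proof is correct: the auxiliary fact $X \models \lglred{M}{X}$ is exactly the content of the first bullet of Proposition~\ref{prop:XX-lse-model-LM}, and both bullets of Proposition~\ref{prop:x-sm-mn} then follow by the definitional unwinding you carry out (the reduct bookkeeping for $M \cup N$ and the minimality check via the added facts are both right). The paper states this proposition without proof, and what you have written is precisely the straightforward argument it implicitly relies on, so there is no divergence in approach to report.
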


Based on above results,  a characterization of semi-strong equivalence between \lpmln programs is presented in Lemma \ref{lem:lpmln-sm-strong-equiv}. 

\begin{lemma}
	\label{lem:lpmln-sm-strong-equiv}
	Let $L$ and $M$ be two \lpmln programs, they are semi-strongly equivalent, 
	iff they have the same SE-models, i.e. $LSE(L) = LSE(M)$.
\end{lemma}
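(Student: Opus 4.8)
The plan is to mimic the classical SE-model characterization of strong equivalence, adapting it to the \lpmln setting and reducing everything to the two propositions already established. The backbone is a distributivity property of SE-models over program union, which I would prove first since both directions depend on it: for any \lpmln programs $L$ and $N$,
\[
LSE(L \cup N) = LSE(L) \cap LSE(N).
\]
This holds because each of the three operations defining $\lglred{M}{Y}$ commutes with union. The \lpmln reduct is computed rule by rule, so $(L\cup N)_Y = L_Y \cup N_Y$; dropping weights distributes over union; and the GL-reduct of a union is the union of the GL-reducts. Hence $\lglred{(L\cup N)}{Y} = \lglred{L}{Y} \cup \lglred{N}{Y}$, and an SE-interpretation $(X,Y)$ satisfies the union reduct with both of its coordinates iff it does so for each part, giving the intersection.

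For the ($\Leftarrow$) direction I would argue as follows. Assume $LSE(L) = LSE(M)$. By the distributivity property, $LSE(L\cup N) = LSE(M\cup N)$ for every context $N$. Proposition~\ref{prop:x-sm-mn} shows that the stable models of a program are completely determined by its SE-models: $X \in SM(P)$ iff $(X,X)$ is a total SE-model of $P$ --- which always holds by the first item of Proposition~\ref{prop:XX-lse-model-LM} --- and no $(X',X)$ with $X'$ a proper subset of $X$ is an SE-model of $P$. Reading this off for $P = L\cup N$ and $P = M\cup N$ yields $SM(L\cup N) = SM(M\cup N)$ for all $N$, which is exactly semi-strong equivalence.

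The ($\Rightarrow$) direction is the substantive one, and I would prove its contrapositive: if $LSE(L) \neq LSE(M)$, I exhibit a context $N$ separating the stable models. Pick a witness $(X,Y)$ in the symmetric difference, say $(X,Y) \in LSE(L) \setminus LSE(M)$. A first observation, specific to \lpmln, is that the witness must be \emph{non-total}: by Proposition~\ref{prop:XX-lse-model-LM} every total SE-interpretation is an SE-model of every program, so the two SE-model sets already agree on all total pairs, forcing $X \subsetneq Y$; the second item of that proposition then gives $X \models \lglred{L}{Y}$ while $X \not\models \lglred{M}{Y}$. I would take the context
\[
N = \{\alpha : a. \mid a \in X\} \cup \{\alpha : a \leftarrow b. \mid a,b \in Y \setminus X\} \cup \{\alpha : {\leftarrow} a. \mid a \in (lit(L)\cup lit(M)) \setminus Y\},
\]
whose intended effect is that, among the subsets of $Y$, only $X$ and $Y$ survive as admissible first coordinates.

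Finally I would verify, using the distributivity property and Proposition~\ref{prop:x-sm-mn}, that $Y$ separates the two unions. The key computation is that $\lglred{N}{Y} = \overline{N}$ (every rule of $N$ is retained by both reducts) and that, for $X' \subseteq Y$, one has $(X',Y) \in LSE(N)$ iff $X \subseteq X'$ and $X'$ contains either none or all of $Y\setminus X$, i.e. iff $X' = X$ or $X' = Y$. Consequently $(X,Y) \in LSE(N)$, hence $(X,Y) \in LSE(L\cup N)$ is a non-total SE-model lying below $Y$, so $Y \notin SM(L\cup N)$; whereas the only proper subset of $Y$ admitted by $N$ is $X$, and $(X,Y) \notin LSE(M)$ eliminates it, so no non-total SE-model of $M\cup N$ lies below $Y$ and $Y \in SM(M\cup N)$. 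Thus $SM(L\cup N) \neq SM(M\cup N)$. I expect the main obstacle to be exactly this last verification --- designing $N$ so that the admissible first coordinates under $Y$ collapse to $\{X,Y\}$, while carefully checking that the \lpmln reduct and the GL-reduct leave the rules of $N$ intact so the SE-model computation goes through.
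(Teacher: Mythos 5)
Your proposal is correct and follows essentially the same route as the paper's own proof: the same distributivity of the reduct over unions (which the paper uses implicitly via $\lglred{(L \cup N)}{Y} = \lglred{L}{Y} \cup \lglred{N}{Y}$), the same reading of stable models off SE-models via Propositions~\ref{prop:XX-lse-model-LM} and~\ref{prop:x-sm-mn}, and for the only-if direction the same separating context built from facts over $X$ together with rules $a \leftarrow b$ over $Y \setminus X$. The only deviations are cosmetic and harmless: you state the distributivity as an explicit lemma, use hard weights $\alpha$ instead of the paper's weight $1$ (irrelevant, since semi-strong equivalence ignores weights), and add constraints on literals outside $Y$ that the argument never actually needs.
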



\begin{proof}
The proof proceeds basically along the lines of the corresponding proof by Turner \cite{Turner2001SE}. 

For the if direction, 
suppose $LSE(L) = LSE(M)$, we need to prove that for any \lpmln program $N$, the programs $L \cup N$ and $M \cup N$ have the same stable models. 
We use proof by contradiction. 
Assume $Y$ is a set of literals such that $Y \in SM(L \cup N) - SM(M \cup N)$.
By the definition, we have $Y \models \lglred{(L \cup N)}{Y} = \lglred{L}{Y} \cup \lglred{N}{Y}$. 
By Proposition \ref{prop:XX-lse-model-LM}, we have $(Y,Y)$ is an SE-model of $L$. 
Hence, $(Y,Y)$ is also an SE-model of $M$. 
Then, we have $Y \models \lglred{M}{Y}$ and $Y \models \lglred{(M \cup N)}{Y}$. 
By the assumption $Y \not\in SM(M \cup N)$, there exists a consistent set $X$ of literals such that $X \models \lglred{(M \cup N)}{Y}$, then we have $X \models \lglred{M}{Y}$ and  $X \models \lglred{N}{Y}$, hence, $(X,Y)$ is an SE-model of $M$, which means $(X,Y)$ is also an SE-model of $L$. 
By the definition of stable model, $Y$ cannot be a stable model of $L \cup N$, which contradicts with the assumption $Y \in SM(L \cup N)$. 
Therefore, the programs $L \cup N$ and $M \cup N$ have the same stable models, and the if direction of Lemma \ref{lem:lpmln-sm-strong-equiv} is proven.

For the only-if direction, suppose  $SM(L \cup N) = SM(M \cup N)$, we need to prove that $LSE(L) = LSE(M)$. 
We use proof by contradiction. 
Assume $(X,Y)$ is an SE-interpretation such that $(X,Y) \in LSE(L) - LSE(M)$. 
By Proposition \ref{prop:XX-lse-model-LM}, we have $X \not\models \lglred{M}{Y}$. 
Let $N = \{1 : a. ~|~ a \in X\} \cup \{1 : a \leftarrow b. ~|~ a, b \in Y-X \}$. 
We have $\lglred{(M \cup N)}{Y} = \lglred{M}{Y} \cup \overline{N}$. 
Let $X'$ be a set of literals such that $X' \subseteq Y$ and $X' \models \lglred{M}{Y} \cup \overline{N}$. 
By the construction of $N$, we have $X \subseteq X'$. 
Since $X \not\models \lglred{M}{Y}$, we have $X \neq X'$. 
Hence, there must exist a literal $l \in Y-X$ such that $l \in X'$.
By the construction of $N$, we have $(Y-X) \subseteq X'$, which means $X' = Y$.
By the definition of stable models, $Y$ is a stable model of $M \cup N$, which means $Y$ should also be a stable model of $L \cup N$. 
By the definition of stable model, $(X,Y)$ cannot be an SE-model of $L$, which contradicts with the assumption $(X,Y) \in LSE(L)$. 
Therefore, $L$ and $M$ have the same SE-models, and the only-if direction of Lemma \ref{lem:lpmln-sm-strong-equiv} is proven. 
\end{proof}

\subsubsection{Characterizing W-Strong and P-Strong Equivalences}
Now we present a main result of the paper, that is, the characterizations of w-strong and p-strong equivalences. 
Based on Lemma \ref{lem:lpmln-sm-strong-equiv}, Lemma \ref{lem:lpmln-pse-sufficient} provides a sufficient condition to characterize the p-strong equivalence for \lpmln programs. 

\begin{lemma}
	\label{lem:lpmln-pse-sufficient}
	Two \lpmln programs $L$ and $M$ are p-strongly equivalent,  if $LSE(L) = LSE(M)$, and there exist two constants $c$ and $k$ such that 
	for each SE-model $(X,Y) \in LSE(L)$, $W(L,(X,Y)) = exp(c + k*\alpha) * W(M,(X,Y))$. 
\end{lemma}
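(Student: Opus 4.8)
The plan is to fix an arbitrary \lpmln program $N$ and verify directly that $L \cup N \equiv_p M \cup N$, which by Definition \ref{def:lpmln-strong-equivalences} is exactly what p-strong equivalence requires. By Definition \ref{def:lpmln-ordinary-equivalence-p} this splits into two obligations: that $L \cup N$ and $M \cup N$ have the same stable models, and that $P(L \cup N, X) = P(M \cup N, X)$ for every common stable model $X$. The first I would discharge through Lemma \ref{lem:lpmln-sm-strong-equiv} (semi-strong equivalence), and the second through a weight-degree computation that exploits the constant-factor hypothesis.

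For the stable-model part, I would first note that both the \lpmln reduct and the GL-reduct distribute over program union, so that $\lglred{(L \cup N)}{Y} = \lglred{L}{Y} \cup \lglred{N}{Y}$ for every $Y$. Splitting the satisfaction condition of Definition \ref{def:lse-model} across this union yields $LSE(L \cup N) = LSE(L) \cap LSE(N)$. Since $LSE(L) = LSE(M)$ by hypothesis, this gives $LSE(L \cup N) = LSE(M \cup N)$, and Lemma \ref{lem:lpmln-sm-strong-equiv} then delivers $SM(L \cup N) = SM(M \cup N)$; call this common set $\mathcal{S}$.

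For the probability part, the key identity is a decomposition of the weight degree over the union. Because $(L \cup N)_X = L_X \cup N_X$ and weights add, I would argue that $W(L \cup N, X) = W(L,(X,X)) \cdot W(N_X)$, where $W(L,(X,X)) = W(L_X)$ is the SE-model weight of the total SE-interpretation $(X,X)$ in the sense of Definition \ref{def:lse-weight-degree}. By the first item of Proposition \ref{prop:XX-lse-model-LM} every total SE-interpretation is an SE-model, so $(X,X) \in LSE(L)$ and the hypothesis applies to it, giving $W(L,(X,X)) = exp(c + k*\alpha) \cdot W(M,(X,X))$. Combining these, $W(L \cup N, X) = exp(c + k*\alpha) \cdot W(M \cup N, X)$ for every $X \in \mathcal{S}$, with the factor $exp(c + k*\alpha)$ independent of $X$. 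Substituting into the limit defining $P$ in Equation \eqref{eq:probability-sm}, this factor appears identically in the numerator and in every summand of the denominator over $\mathcal{S}$, hence cancels, leaving $P(L \cup N, X) = P(M \cup N, X)$.

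The main obstacle I expect is justifying the weight decomposition $W(L \cup N, X) = W(L,(X,X)) \cdot W(N_X)$ cleanly: one must be careful about how the weights of rules common to $L$ and $N$ combine under $\cup$, and confirm that the multiplicative factor relating $W(L \cup N, X)$ to $W(M \cup N, X)$ is genuinely the \emph{same} constant $exp(c + k*\alpha)$ for every stable model in $\mathcal{S}$, rather than one that varies with $X$. Once that uniformity is secured, the cancellation in the limit $\alpha \rightarrow \infty$ is routine, and the two obligations together give $L \cup N \equiv_p M \cup N$ for arbitrary $N$, i.e. $L \equiv_{s,p} M$.
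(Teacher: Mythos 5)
Your proof is correct and takes essentially the same route as the paper's: the stable-model half is discharged via Lemma \ref{lem:lpmln-sm-strong-equiv}, and the probability half follows by cancelling the uniform factor $exp(c + k*\alpha)$ in the normalization of Equation \eqref{eq:probability-sm}. If anything, your write-up is slightly more careful than the paper's, which performs the cancellation over $PSM(L \cup N) = PSM(M \cup N)$ (asserted without proof) rather than directly inside the limit over $SM$, and which leaves both the decomposition $W(L \cup N, X) = W(L,(X,X)) \cdot W(N_X)$ and the appeal to Proposition \ref{prop:XX-lse-model-LM} implicit; the residual worry you raise about rules shared between $L$ and $N$ is likewise present, and equally unaddressed, in the paper's own argument.
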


\begin{proof}

%

For two \lpmln programs $L$ and $M$, by Lemma \ref{lem:lpmln-sm-strong-equiv}, if $LSE(L) = LSE(M)$, $L$ and $M$ are semi-strongly equivalent, i.e. for any \lpmln program $N$, $SM(L \cup N) = SM(M \cup N)$.  
Suppose there exist two constants $c$ and $k$ such that for each SE-model $(X,Y) \in LSE(L)$, $W(L,(X,Y)) = exp(c + k*\alpha) * W(M,(X,Y))$, we need to show that $L$ and $M$ are p-strongly equivalent. 
Let $N$ be an \lpmln program, it is easy to check that $X$ is a probabilistic stable model of $L \cup N$ iff $X$ is a probabilistic stable model of $M \cup N$, i.e. $PSM(L \cup N) = PSM(M \cup N)$. 
For a stable model $X \in PSM(L \cup N)$, the probability degree of $X$ can be reformulated as 
\begin{equation}
\label{eq:p-se-probability-equal}
\begin{split}
P(L \cup N, X)
& = \frac{W(L \cup N,X)}{\Sigma_{X'\in PSM(L \cup N)}W(L \cup N,X')} 
 = \frac{ exp(c + k*\alpha) * W(M \cup N,X)}{exp(c + k*\alpha) * \Sigma_{X'\in PSM(M \cup N)}  W(M \cup N,X')} \\
& = \frac{W(M \cup N,X)}{\Sigma_{X'\in PSM(M \cup N)} W(M \cup N,X')} 
= P(M \cup N, X)
\end{split} 
\end{equation}
By the definition of p-strong equivalence, we have  $L \equiv_{s,p} M$.
\end{proof}

The condition in Lemma \ref{lem:lpmln-pse-sufficient}, called PSE-condition, is sufficient to characterize the p-strong equivalence. 
One may ask that whether the PSE-condition is also necessary. 
To answer the question, we need to consider the hard rules of \lpmln particularly. 
For \lpmln programs containing no hard rules, it is easy to check that the PSE-condition is necessary. 
But for arbitrary \lpmln programs, this is not an immediate result, which is shown as follows. 
Firstly, we introduce some notations. 
For a set $U$ of literals, we use $2^U$ to denote the power set of $U$, 
and use $2^{U^+}$ to denote the maximal consistent part of the power set of $U$, 
i.e. $2^{U^+} = \{X \in 2^U~|~ X \text{ is consistent }\}$.

\begin{lemma}
	\label{lem:pse-onlyif-part1}
	For two p-strongly equivalent \lpmln programs $L$ and $M$, let $N_1$ and $N_2$ be arbitrary \lpmln programs such that $PSM(L \cup N_1) \cap PSM(L \cup N_2) \neq \emptyset$. 
	There exist two constants $c$ and $k$ such that for any SE-models $(X,Y)$ of $L$, 
	if $Y \in PSM(L \cup N_1) \cup PSM(L \cup N_2)$, then $W(L,(X,Y)) = exp(c + k*\alpha) * W(M,(X,Y))$.
\end{lemma}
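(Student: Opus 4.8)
The plan is to reduce the statement to two separate \emph{constant-difference} claims, one for the soft weights and one for the number of satisfied hard rules, and then to glue the resulting constants together using the shared probabilistic stable model guaranteed by the hypothesis. First I would record the structural facts that come for free. Since $L$ and $M$ are p-strongly equivalent they are in particular semi-strongly equivalent, so Lemma~\ref{lem:lpmln-sm-strong-equiv} gives $LSE(L)=LSE(M)$; hence $(X,Y)$ is an SE-model of $L$ iff it is one of $M$, and both $W(L,(X,Y))=W(L_Y)$ and $W(M,(X,Y))=W(M_Y)$ depend only on $Y$. Thus the goal collapses to proving $W(L_Y)=exp(c+k*\alpha)*W(M_Y)$ for every $Y\in PSM(L\cup N_1)\cup PSM(L\cup N_2)$. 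Writing $W(L_Y)=exp(s_L(Y)+h_L(Y)*\alpha)$, where $s_L(Y)$ is the sum of the soft weights in $L_Y$ and $h_L(Y)$ the number of hard rules in $L_Y$, and similarly for $M$, this is equivalent to the two identities $s_L(Y)-s_M(Y)=c$ and $h_L(Y)-h_M(Y)=k$ with $c,k$ independent of $Y$.

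For the soft part I would exploit the probability equalities directly. Fix $i\in\{1,2\}$. Because p-strong equivalence forces $PSM(L\cup N_i)=PSM(M\cup N_i)$ together with equal probability degrees on them, for any $X\in PSM(L\cup N_i)$ the identity $P(L\cup N_i,X)=P(M\cup N_i,X)$ holds. After the limit in~\eqref{eq:probability-sm} the normalizing sum ranges only over the probabilistic stable models, and using the multiplicativity $W((L\cup N_i)_X)=W(L_X)*W((N_i)_X)$ the common factor $W((N_i)_X)$ cancels from numerator and denominator. Cross-multiplying then makes $exp(s_L(X))/exp(s_M(X))$ equal to a ratio of normalizing sums that does not involve $X$; hence $s_L(X)-s_M(X)$ equals a constant $c_i$ on $PSM(L\cup N_i)$.

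For the hard part I would use the characterization that a stable model is probabilistic exactly when it satisfies the maximal number of hard rules. For $X\in PSM(L\cup N_i)=PSM(M\cup N_i)$, the quantity $h_L(X)+h_{N_i}(X)$ attains the program-wide maximum of $L\cup N_i$, while $h_M(X)+h_{N_i}(X)$ attains the program-wide maximum of $M\cup N_i$; subtracting these two maxima eliminates the $N_i$ contribution and leaves $h_L(X)-h_M(X)=k_i$, a constant on $PSM(L\cup N_i)$.

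Finally I would glue the pieces. Since the hypothesis provides some $Y_0\in PSM(L\cup N_1)\cap PSM(L\cup N_2)$, evaluating the two constant relations at $Y_0$ forces $c_1=c_2$ and $k_1=k_2$; setting $c=c_1$ and $k=k_1$ covers all of $PSM(L\cup N_1)\cup PSM(L\cup N_2)$, and recombining the soft and hard identities yields $W(L_Y)=exp(c+k*\alpha)*W(M_Y)$ as required. I expect the hard-rule step to be the main obstacle: one must argue carefully that the $\alpha\to\infty$ limit collapses the normalization exactly onto the probabilistic stable models, so that the difference of the two program-wide maxima is genuinely a constant and does not secretly depend on $Y$, and that the multiplicative split of $W$ across $L$ and $N_i$ is legitimate; the gluing through $Y_0$ is then routine.
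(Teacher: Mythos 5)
The paper never actually proves Lemma~\ref{lem:pse-onlyif-part1} (it is stated bare and then used), so there is no official argument to compare yours against; judged on its own, your skeleton is the natural one and most of it is sound. The reduction to $Y$-dependence via $LSE(L)=LSE(M)$ and $W(L,(X,Y))=W(L_Y)$ is right; the soft-weight identity via the post-limit form of \eqref{eq:probability-sm} and cross-multiplication is right; the hard-count identity via the ``probabilistic $=$ satisfies the most hard rules'' characterization (which the paper itself asserts) is right, using $SM(L\cup N_i)=SM(M\cup N_i)$ so both maxima range over the same set; and the gluing at a shared $Y_0$ is routine, as you say. Also, the step you predict to be the main obstacle --- that the $\alpha\to\infty$ limit collapses the normalization onto $PSM$ --- is in fact unproblematic and standard.

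The genuine gap is the step you only half-flag at the end: the multiplicative split $W((L\cup N_i)_X)=W(L_X)\cdot W((N_i)_X)$ (and likewise $h_{L\cup N_i}=h_L+h_{N_i}$) is \emph{false} whenever $N_i$ shares weighted rules with $L$ or $M$, because \lpmln programs are sets and the union collapses duplicates --- and the lemma allows \emph{arbitrary} $N_1,N_2$. This is not a routine check; it makes your intermediate claim false. Concretely, let $L=\{1:a.\}$ and $M=\{1:a.,~5:b\leftarrow b.\}$. These are p-strongly equivalent (the TAUT rule is satisfied by every interpretation, so it multiplies every reduct weight by $e^{5}$ and leaves all probabilities unchanged under any extension; cf.\ Theorem~\ref{thm:valid}), and the lemma holds for them with $c=-5$, $k=0$. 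Now take $N_1=N_2=\{5:b\leftarrow b.\}$: then $L\cup N_1=M\cup N_1$, your ratio of normalizing sums equals $1$, and your soft step concludes $s_L(Y)-s_M(Y)=0$ on $PSM(L\cup N_1)$, whereas in fact $s_L(Y)-s_M(Y)=-5$ for every $Y$; the constants your method produces violate the conclusion. The repair is to prove the lemma first for extensions sharing no rule with $L\cup M$ (there your argument works verbatim) and then reduce the general case to it by ``freshening'' $N_i$: replace each $w:r\in N_i$ by $w:h(r)\leftarrow b^+(r),~t,~not~b^-(r)$ and add the fact $\alpha:t.$ for a fresh atom $t$. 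The resulting $N'$ is disjoint from $L\cup M$, satisfies $PSM(L\cup N')=\{S\cup\{t\}\mid S\in PSM(L\cup N_i)\}$, and since $t$ occurs in neither $L$ nor $M$ we have $W(L_{S\cup\{t\}})=W(L_S)$ and $W(M_{S\cup\{t\}})=W(M_S)$, so the constancy established for $N'$ transfers back to $PSM(L\cup N_i)$.
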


By Lemma \ref{lem:pse-onlyif-part1}, for two p-strongly equivalent \lpmln programs $L$ and $M$, 
to prove the necessity of the PSE-condition, we need to find a set $E$ of \lpmln programs satisfying 

\begin{itemize}
	\item[-] $\forall N_1, ~N_2 \in E$, $PSM(L \cup N_1) \cap PSM(L \cup N_2) \neq \emptyset$; and 
	\item[-] $\bigcup_{N \in E} PSM(L \cup N) = 2^{U^+}$, where $U$ is the set of literals occurred in $L$ and $M$, i.e. $U = lit(\overline{L \cup M})$. 
\end{itemize}
Above set $E$ is called a set of necessary extensions w.r.t. \lpmln programs $L$ and $M$. 
As shown in Proposition \ref{prop:XX-lse-model-LM}, an arbitrary total SE-interpretation is an SE-model of an \lpmln program, 
therefore, 
if there exists a set of necessary extensions of two p-strongly equivalent programs $L$ and $M$, 
then the necessity of the PSE-condition can be proven. 
In what follows, we present a method to construct a set of necessary extensions. 

\begin{definition}
	\label{def:r-mxa}
	For two consistent sets $X$ and $Y$ of literals, and an atom $a$ such that $a \not\in X \cup Y$, 
	by $R(X,Y,a)$ we denote an \lpmln program as follows
	\begin{align}
	&\alpha ~:~ \leftarrow X, ~not ~ Y, ~a. \\
	&\alpha ~:~ a \leftarrow X, ~not ~ Y.
	\end{align}
\end{definition}

\begin{definition}[flattening extension]
	\label{def:flattening-extension}
	For an \lpmln program $M$ and a set $U$ of literals such that $lit(\overline{M}) \subseteq U$, 
	a flattening extension $T^k(M, U)$ of $M$ w.r.t. $U$ is defined as 
	\begin{itemize}
		\item[-] $T^0(M, U) = M \cup N_0$;
		\item[-] $T^{i+1}(M, U) = T^i(M, U) \cup R(X \cap U, U-X,a_{i+1})$, 
	\end{itemize}
	where $N_0$ is a set of weighted facts constructed from $U$, i.e.  $N_0 = \{\alpha : a_k. ~|~ a_k \in U\}$, $X$ is a probabilistic stable model of $T^i(M,U)$, i.e. $X \in PSM(T^i(M,U))$, and $a_{i+1} \not\in lit\left(\overline{T^i(M ,U)}\right)$. 
\end{definition}

According to the splitting set theorem of \lpmln \cite{Wang2018Splitting}, the  flattening extension has following properties. 
\begin{proposition}
	\label{prop:flattening-extension}
	For an \lpmln program $M$ and a set $U$ of literals, 
	if $T^{k+1}(M, U)$ is constructed from $T^k(M,U)$ by adding $R(X \cap U, U-X,a_{k+1})$, then we have 
	\begin{itemize}
		\item[-] $SM(T^0(M,U)) = 2^{U^+}$; 
		\item[-] $SM(T^{k+1}(M, U)) = SM(T^k(M, U)) \cup  
		\{ Y \cup \{a_{k+1}\} ~|~ Y \in SM(T^k(M, U)), \text{ and } Y \cap U = X \cap U\} \}$; and 
		\item[-] the weight degrees of stable models have following relationships
		\begin{equation}
		W(T^{k+1}(M, U), Y) = \begin{cases}
		W(T^{k}(M, U), Y) * e^{2\alpha} & \text{ if } Y \cap U \neq X \cap U, \\
		W(T^{k}(M, U), Y) * e^{\alpha} & \text{ otherwise. }
		\end{cases}
		\end{equation}
		and for two stable models $Y$ and $Z$ of $T^k(M, U)$, if $Y \cap U = Z \cap U$, then $W(T^k(M, U), Y) = W(T^k(M, U), Z)$.
	\end{itemize}
\end{proposition}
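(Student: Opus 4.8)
The plan is to prove all three bullets simultaneously by induction on $k$, using the splitting set theorem of \cite{Wang2018Splitting} to control the stable-model structure and a direct computation with the \lpmln reduct to track the weight degrees.

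For the base case I would argue directly from the reduct that $SM(T^0(M,U)) = 2^{U^+}$. Since $N_0$ contains the hard fact $\alpha : a_k.$ for every $a_k \in U$ and $lit(\overline{M}) \subseteq U$, every stable model of $T^0(M,U)$ is a consistent subset of $U$, so $SM(T^0(M,U)) \subseteq 2^{U^+}$. Conversely, for any consistent $X \subseteq U$, the reduct $\overline{(T^0(M,U))_X}$ retains exactly the facts $\{a_k. ~|~ a_k \in X\}$ together with the $M$-rules satisfied by $X$; the facts force $X$ and exclude every proper subset as a model, while the surviving $M$-rules are satisfied by $X$, so $X$ is the minimal model of its own reduct, i.e. $X \in SM(T^0(M,U))$. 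The crucial point is the fact, specific to \lpmlnend, that violating a hard rule is permitted: rules falsified by $X$ are simply dropped by the reduct rather than killing $X$.

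For the inductive step I would fix $X \in PSM(T^k(M,U))$ and the fresh atom $a_{k+1} \notin lit(\overline{T^k(M,U)})$, and take $B = lit(\overline{T^k(M,U)})$ as a splitting set of $T^{k+1}(M,U) = T^k(M,U) \cup R(X \cap U, U-X, a_{k+1})$. Since $a_{k+1}$ occurs only in $R(X\cap U, U-X, a_{k+1})$ and all body literals over $B$ of those two rules lie in $B$, the set $B$ is a splitting set with bottom $T^k(M,U)$ and top $R(X\cap U, U-X, a_{k+1})$. Observing that the condition ``$X\cap U$ holds and $U-X$ fails'' in the bodies is equivalent, for any $Y \in SM(T^k(M,U))$, to $Y \cap U = X \cap U$, I would evaluate the top w.r.t. each such $Y$: when $Y \cap U = X \cap U$ the top reduces to $\{\alpha : \leftarrow a_{k+1}.~~\alpha : a_{k+1}.\}$, whose \lpmln stable models over $\{a_{k+1}\}$ are both $\emptyset$ and $\{a_{k+1}\}$ (each violating one of the two conflicting hard rules), whereas when $Y \cap U \neq X \cap U$ both top rules have unsatisfiable bodies and the only extension is $\emptyset$. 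The splitting set theorem then yields exactly $SM(T^{k+1}(M,U)) = SM(T^k(M,U)) \cup \{Y \cup \{a_{k+1}\} ~|~ Y \in SM(T^k(M,U)), ~ Y \cap U = X \cap U\}$.

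Finally, for the weight degrees I would read them off directly from the reduct of $T^{k+1}(M,U)$, using that $\overline{(T^{k+1}(M,U))_Z}$ splits into the $T^k$-part, whose literals avoid $a_{k+1}$, and the two rules of $R(X\cap U, U-X, a_{k+1})$. For $Y \in SM(T^k(M,U))$ (so $a_{k+1} \notin Y$): if $Y \cap U \neq X \cap U$ both $R$-rules are satisfied vacuously, giving the factor $e^{2\alpha}$; if $Y \cap U = X \cap U$ the constraint is satisfied but the derivation rule is violated, giving the factor $e^{\alpha}$. The last clause, that $W(T^k(M,U),Y)$ depends only on $Y \cap U$, follows by a parallel induction: the base case is trivial because every stable model of $T^0(M,U)$ equals its own $U$-part, and in the step the per-model factor ($e^{2\alpha}$ or $e^{\alpha}$) is determined solely by whether the $U$-part equals $X \cap U$, which is shared by $Y$ and by the new model $Y \cup \{a_{k+1}\}$. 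I expect the main obstacle to be precisely this weight bookkeeping around the hard rules: one must account for the rules deleted by partial evaluation yet satisfied (vacuously) in the full program, since these contribute the $e^{2\alpha}$ factor and are easy to lose if one reads weights off the residual top program rather than off the reduct of $T^{k+1}(M,U)$ itself.
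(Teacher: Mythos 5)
Your proof is correct and takes essentially the same route as the paper, whose entire justification for this proposition is the one-line appeal to the splitting set theorem of Wang et al.: your choice of splitting set $B = lit(\overline{T^k(M,U)})$ with bottom $T^k(M,U)$ and top $R(X\cap U, U-X, a_{k+1})$, the case analysis on whether $Y \cap U = X \cap U$, and the direct reduct computation of the base case and of the weights are precisely the details that invocation leaves implicit. Your closing remark---that the hard rules deleted by partial evaluation but vacuously satisfied in the full program must still contribute their $e^{2\alpha}$ factor, so weights should be read off the reduct of $T^{k+1}(M,U)$ itself---is the right way to handle the one genuinely delicate point in the bookkeeping.
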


\begin{example}
	\label{ex:flattening-extension}
	Let $L$ be the \lpmln program in Example \ref{ex:se-model}, and a set of literals $U = \{a, b\}$. 
	By Definition \ref{def:flattening-extension}, $T^0(L, U) = L \cup \{\alpha : a. ~ \alpha : b.  \}$, 
	it is easy to check that all subsets of $U$ are the stable models of $T^0(L, U)$, $U$ is the unique probabilistic stable model. 
	By Definition \ref{def:r-mxa}, $R(U, \emptyset, c_{1})$ is as follows 
	\begin{eqnarray}
	& \alpha ~:~ \leftarrow a, ~b, ~c_1.\\
	& \alpha ~:~ c_1 \leftarrow a, ~b.
	\end{eqnarray}
	and we have  $T^1(L, U) = T^0(L, U) \cup R(U, \emptyset, c_1)$. 
	The stable models and their weight degrees of $L$, $T^0(L, U)$, and $T^1(L, U)$ are shown in Table \ref{tab:flattening-extension}.
	From the table, we can observe that the flattening extension can be used to adjust the sets of literals that satisfy the most hard rules. 
	\begin{table}
		\centering
		\caption{Computing Results in Example \ref{ex:flattening-extension}}
		\label{tab:flattening-extension}
		\begin{tabular}{cccccc}
			\hline
			Weight & $\emptyset$ & $\{a\}$  & $\{b\}$ & $\{a, b\}$ & $\{a, b, c_1\}$ \\
			\hline
			$L$ & $e^0$ & $e^{\alpha + 1}$ & $e^{\alpha + 1}$ & - & - \\
			$T^0(L, U)$ & $e^0$ & $e^{2\alpha + 1}$ & $e^{2\alpha + 1}$ & $e^{3\alpha + 1}$ & - \\
			$T^1(L, U)$ & $e^{2\alpha}$ & $e^{4\alpha + 1}$ & $e^{4\alpha + 1}$ & $e^{4\alpha + 1}$ &  $e^{4\alpha + 1}$ \\
			\hline
		\end{tabular}
	\end{table}
\end{example}

\begin{lemma}
	\label{lem:flattening-extension}
	Let $L$ and $M$ be two p-strongly equivalent \lpmln programs, and $U = lit(\overline{L \cup M})$. 
	For two consistent subsets $X$ and $Y$ of $U$, there exists a flattening extension $T^k(L, U)$ such that $X$ and $Y$ are probabilistic stable models of $T^k(L, U)$.
\end{lemma}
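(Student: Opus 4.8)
The plan is to forget the real-valued part of the weights and reduce the claim to a finite combinatorial ``leveling'' argument on the consistent subsets of $U$. Two preliminary facts drive everything. First, by Proposition~\ref{prop:flattening-extension} we have $SM(T^0(L,U)) = 2^{U^+}$ and each subsequent step only enlarges the set of stable models, so $2^{U^+} \subseteq SM(T^k(L,U))$ for every $k$; in particular the given $X, Y \subseteq U$ are stable models of every flattening extension, and the only thing left to force is that they are \emph{probabilistic}. Second, since a stable model is probabilistic exactly when it satisfies the most hard rules, and since by the last clause of Proposition~\ref{prop:flattening-extension} the weight degree of a stable model depends only on its intersection with $U$, I group stable models into $U$-classes $c \subseteq U$ and attach to each class the number $\ell(c)$ of hard rules it satisfies, i.e.\ the coefficient of $\alpha$ appearing in its weight degree. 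Writing $\delta(c) = \max_{c'} \ell(c') - \ell(c)$ for the \emph{deficit} of a class, a subset of $U$ is a probabilistic stable model of $T^k(L,U)$ precisely when its deficit is $0$ (the auxiliary stable models introduced by the construction always sit at the level of their own $U$-class, so the global maximum level is attained on $2^{U^+}$).

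Next I read the effect of a single flattening step off the weight recurrence of Proposition~\ref{prop:flattening-extension}. Flattening a probabilistic stable model whose $U$-class is $c$ --- which I may always realize by taking the subset $c$ itself as the chosen probabilistic stable model in Definition~\ref{def:flattening-extension} --- adds $2$ to $\ell(c')$ for every class $c' \neq c$ and adds $1$ to $\ell(c)$. A short computation of the resulting deficits splits into two cases. If $c$ is the \emph{unique} class of deficit $0$, the step keeps $c$ at deficit $0$ and decreases the deficit of every other class by exactly $1$. If some class other than $c$ also has deficit $0$, the step raises the deficit of $c$ to $1$ and leaves all other deficits unchanged.

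The heart of the argument is a single composite move, a \emph{round}, which flattens each class of the current maximal (deficit-$0$) set exactly once, in any fixed order. As long as more than one class still has deficit $0$, each flatten falls in the second case above and merely demotes the chosen class to deficit $1$, leaving the lagging classes untouched; the final flatten of the round acts on what has by then become the unique deficit-$0$ class, so it falls in the first case and lowers every remaining deficit by $1$, which in passing restores all the just-demoted classes to deficit $0$. Thus one round leaves the maximal set unchanged and decreases the deficit of every class outside it by exactly $1$. Because the $\ell(c)$ are non-negative integers, after at most $\max(\delta(X),\delta(Y))$ rounds both $X$ and $Y$ have deficit $0$; the program reached at that point is the required $T^k(L,U)$, and $X$ and $Y$ are probabilistic stable models of it.

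I expect the main obstacle to be exactly the non-uniqueness of the top level: a flattening step may only be applied to an actual probabilistic stable model, so when several classes tie at the top the single available move pushes a top class down rather than pulling the targets $X, Y$ up, and a naive greedy flattening need not terminate at the desired configuration. The round construction is precisely what converts a block of such individually unhelpful moves, capped by one productive move, into a clean monotone decrease of all remaining deficits. I would finally remark that the hypothesis that $L$ and $M$ are p-strongly equivalent enters only through $U = lit(\overline{L \cup M}) \supseteq lit(\overline{L})$, which is what makes $T^k(L,U)$ well defined in Definition~\ref{def:flattening-extension}; the construction itself never refers to $M$.
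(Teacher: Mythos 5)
Your argument is correct, but there is nothing in the paper to compare it against: the paper states Lemma~\ref{lem:flattening-extension} with no proof at all, offering only the hint in Example~\ref{ex:flattening-extension} that flattening ``can be used to adjust the sets of literals that satisfy the most hard rules.'' Your proposal therefore fills a real gap, and it does so using exactly the tools the paper provides: Proposition~\ref{prop:flattening-extension} gives that the $\alpha$-coefficient of a stable model's weight depends only on its $U$-class, that the bare subsets of $U$ remain stable models of every $T^k(L,U)$, and that flattening a class $c$ adds $1$ to the level of $c$ and $2$ to the level of every other class; your deficit bookkeeping is a faithful translation of this. The genuinely nontrivial point --- the one a naive reading of Definition~\ref{def:flattening-extension} misses --- is the tie problem you isolate: a flattening step must be anchored at a probabilistic stable model of the \emph{current} program, and when several classes share deficit $0$, such a step merely demotes the flattened class (your case~2). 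Your ``round'' construction (flatten each current deficit-$0$ class exactly once, so that the last such flatten acts on the unique remaining top class and falls into case~1) is a clean fix, and the case computations check out: each round lowers every lagging class's deficit by exactly $1$ while every class in the round's initial top set ends the round back at deficit $0$, so after $\max(\delta(X),\delta(Y))$ rounds both $X$ and $Y$ are probabilistic stable models of the resulting finite extension $T^k(L,U)$. Two small remarks: the phrase ``one round leaves the maximal set unchanged'' is slightly loose, since classes whose deficit drops to $0$ join the top set at the end of a round --- this is harmless, and your recomputation of the top set at the start of each round already accounts for it; and your closing observation is correct and worth keeping explicit, namely that the p-strong equivalence of $L$ and $M$ plays no role beyond fixing $U = lit(\overline{L \cup M}) \supseteq lit(\overline{L})$, so the lemma is really a statement about $L$ and any such superset $U$ of its literals.
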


Lemma \ref{lem:flattening-extension} provides a method to construct a set of necessary extensions of two p-strongly equivalent \lpmln programs by constructing a set of flattening extensions, 
which means the PSE-condition is necessary to characterize the p-strong equivalence for \lpmln programs.

\begin{theorem}
	\label{thm:lpmln-strong-equivalence-w}
	Let $L$ and $M$ be two \lpmln programs, 
	\begin{itemize}
		\item[(i)] $L$ and $M$ are p-strongly equivalent iff $LSE(L) = LSE(M)$, and there exist two constants $c$ and $k$ such that 
		for each SE-model $(X,Y) \in LSE(L)$, $W(L,(X,Y)) = exp(c + k*\alpha) * W(M,(X,Y))$; 
		\item[(ii)] $L$ and $M$ are w-strongly equivalent iff they are p-strongly equivalent and the constants $c=k=0$.
	\end{itemize}
\end{theorem}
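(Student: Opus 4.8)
The plan is to derive both parts from the lemmas already established, in each case reducing the statement to a claim about the single-variable weight ratio. The key preliminary observation is that $W(L,(X,Y))$ depends on $Y$ alone, namely through $M_Y$, so the PSE-condition is equivalent to the assertion that $W(L_Y) = exp(c + k*\alpha)\,W(M_Y)$ for every consistent subset $Y$ of $U = lit(\overline{L \cup M})$, with one fixed pair $(c,k)$. With this reformulation in hand the ``if'' half of part (i) is nothing but Lemma \ref{lem:lpmln-pse-sufficient}, so only the ``only-if'' half of (i) and the two directions of (ii) require work.

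For the ``only-if'' direction of (i) I would first recover $LSE(L) = LSE(M)$. Since p-ordinary equivalence already requires that the stable models coincide (Definition \ref{def:lpmln-ordinary-equivalence-p}), p-strong equivalence $L \equiv_{s,p} M$ forces $SM(L \cup N) = SM(M \cup N)$ for every \lpmln program $N$; that is, $L$ and $M$ are semi-strongly equivalent, and Lemma \ref{lem:lpmln-sm-strong-equiv} then yields $LSE(L) = LSE(M)$. It remains to produce the two global constants witnessing the PSE-condition.

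This last step is the heart of the argument and the one I expect to be the main obstacle: Lemma \ref{lem:pse-onlyif-part1} only supplies \emph{local} constants, valid for the $Y$ covered by a chosen pair of extensions, whereas the theorem demands a single \emph{global} pair $(c,k)$ valid at every $Y$ simultaneously. To bridge this gap I would fix a reference subset $Y_0 \subseteq U$ and, invoking Lemma \ref{lem:flattening-extension}, choose for each consistent $Y \subseteq U$ a flattening extension $N_Y$ having both $Y_0$ and $Y$ as probabilistic stable models. The family $E = \{N_Y\}$ is then a set of necessary extensions: any two of its members share the probabilistic stable model $Y_0$, and $\bigcup_Y PSM(L \cup N_Y)$ exhausts $2^{U^+}$. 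Applying Lemma \ref{lem:pse-onlyif-part1} to $N_Y$ gives constants for which the ratio $W(L_{Y'})/W(M_{Y'})$ equals $exp(c + k*\alpha)$ on $PSM(L \cup N_Y)$; evaluating this at the common point $Y_0$ shows the constants must be exactly those determined by $W(L_{Y_0})/W(M_{Y_0})$, since an identity $exp(c' + k'*\alpha) = exp(c + k*\alpha)$ in the formal parameter $\alpha$ forces $c' = c$ and $k' = k$. Hence the same $(c,k)$ governs every $Y$, and because every total SE-interpretation $(Y,Y)$ is an SE-model (Proposition \ref{prop:XX-lse-model-LM}) this establishes the PSE-condition for all SE-models, completing part (i).

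For part (ii), note that w-ordinary equivalence implies p-ordinary equivalence (equal weight degrees give equal normalized probabilities), so w-strong equivalence implies p-strong equivalence and part (i) supplies constants with $W(L_Y) = exp(c + k*\alpha)\,W(M_Y)$ for all $Y$. To force $c = k = 0$, fix any consistent $Y \subseteq U$ and use Proposition \ref{prop:x-sm-mn} to pick $N$ with $Y \in SM(L \cup N)$; semi-strong equivalence gives $Y \in SM(M \cup N)$ as well, and w-equivalence of $L \cup N$ and $M \cup N$ yields $W(L_Y)\,W(N_Y) = W(M_Y)\,W(N_Y)$, whence $W(L_Y) = W(M_Y)$ after cancelling the common positive factor. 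Comparing with $W(L_Y) = exp(c + k*\alpha)\,W(M_Y)$ gives $exp(c + k*\alpha) = 1$ identically in $\alpha$, so $k = 0$ and then $c = 0$. Conversely, if $L$ and $M$ are p-strongly equivalent with $c = k = 0$, then $W(L_Y) = W(M_Y)$ for every $Y$, so by multiplicativity of the weight degree over reducts $W((L \cup N)_Y) = W((M \cup N)_Y)$ for every $N$; combined with the equality of stable models from part (i), this is exactly w-ordinary equivalence of $L \cup N$ and $M \cup N$ for all $N$, i.e.\ w-strong equivalence.
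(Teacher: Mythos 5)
Your proposal is correct and follows the paper's own route for part (i): Lemma \ref{lem:lpmln-pse-sufficient} for sufficiency, and Lemma \ref{lem:pse-onlyif-part1} combined with the flattening extensions of Lemma \ref{lem:flattening-extension} to build a set of necessary extensions for necessity --- your fixed reference model $Y_0$ merely makes explicit the gluing of the local constants into a single global pair $(c,k)$ that the paper leaves implicit. The paper supplies no separate argument for part (ii), so your cancellation argument there (deriving $W(L_Y) = W(M_Y)$ from w-ordinary equivalence and forcing $exp(c + k*\alpha) = 1$, hence $c = k = 0$) is a sound completion rather than a deviation.
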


\begin{example}
	\label{ex:p-strong-equivalence}
	Consider \lpmln programs $L = \{~ w_1 : a \vee b. ~ w_2 : b \leftarrow a.  \}$ and $M = \{w_3: b. ~   w_4: a \leftarrow not ~b. \}$, 
	where $w_i ~(1 \leq i \leq 4)$ is a variable denoting the weight of corresponding rule. 
	It is easy to check that $(\{b\}, \{a, b\})$ is the unique non-total SE-model of $L$ and $M$, therefore, $L$ and $M$ are semi-strongly equivalent. 
	If the programs are also p-strongly equivalent, we have following system of linear equations, where $\mathcal{C} =  exp(k*\alpha + c)$ and $U = \{a, b\}$.
	
	\begin{equation}
	\left\{
	\begin{array}{l}
	W(L, (\emptyset, \emptyset)) = W(M, (\emptyset, \emptyset)) * \mathcal{C} \\
	W(L, (\{a\}, \{a\})) = W(M, (\{a\}, \{a\})) * \mathcal{C} \\
	W(L, (\{b\}, \{b\})) = W(M, (\{b\}, \{b\})) * \mathcal{C} \\
	W(L, (U, U)) = W(M, (U, U)) * \mathcal{C}
	\end{array}
	\right.
	\Rightarrow
	\left\{
	\begin{array}{l}
	w_2 = c + k*\alpha \\
	w_1 = w_4 + c + k * \alpha \\
	w_1 + w_2 = w_3 + w_4 + c + k * \alpha
	\end{array}
	\right.
	\end{equation}
	Solve the system of equations, 
	we have $L$ and $M$ are p-strongly equivalent iff $w_2 = w_3 = c + k*\alpha$ and $w_1 = w_4 + c + k*\alpha$; 
	and they are w-strongly equivalent iff $w_2 = w_3 = 0$ and $w_1 = w_4$. 
\end{example}


\section{Simplifying \lpmln Programs}
The notions of strong equivalences can be used to study the simplifications of logic programs. 
Specifically, if \lpmln program $L$ and $M$ are strongly equivalent, and the program $M$ is easier to solve or more friendly for human, then $L$ can be replaced by $M$.
In this section, we investigate the simplifications of \lpmln programs via using the notions of strong equivalences. 
In particular, we present an algorithm to simplify and solve \lpmln programs based on strong equivalences firstly.  
Then, we present some syntactic conditions that guarantee the strong equivalence between a single \lpmln rule and the empty set $\emptyset$, 
which can be used to check the strong equivalences efficiently. 

\begin{definition}
\label{def:valid-rule}
An \lpmln rule $w:r$ is called semi-valid, if $w:r$ is semi-strongly equivalent to $\emptyset$; 
the rule is called valid, if $w:r$ is p-strong equivalent to $\emptyset$.
\end{definition}

In Definition \ref{def:valid-rule}, 
we specify two kinds of \lpmln rules w.r.t semi-strong and p-strong equivalences. 
Obviously, a valid \lpmln rule can be eliminated from any \lpmln programs, 
while a semi-valid \lpmln rule cannot. 
By the definition,  eliminating a semi-valid \lpmln rule does not change the stable models of original programs, 
but changes the probability distributions of the  stable models, 
which means it may change the probabilistic stable models of original programs. 

\begin{example}
\label{ex:sm-valid-rules}
Consider three \lpmln programs $L = \{ \alpha : a \leftarrow a.  \}$, $M = \{ \alpha : \leftarrow a.  \}$, and $N = \{1 : a. \}$. 
It is easy to check that rules in $L$ and $M$ are valid and semi-valid, respectively. 
Table \ref{tab:sm-valid-rules} shows the stable models and their probability degrees of \lpmln programs $N$, $L \cup N$, and $M \cup N$. 
It can be observed that eliminating the rule of $M$ from $M \cup N$ makes all stable models of $M \cup N$ probabilistic, 
which means semi-valid rules cannot be eliminated directly.

\begin{table}
\centering
\caption{Computing Results in Example \ref{ex:sm-valid-rules}}
\label{tab:sm-valid-rules}
\begin{tabular}{cccc}
\hline
Stable Model $X$ & $~~P(N, X)~~$ & $~~P(L \cup N, X)~~$ & $~~P(M \cup N, X)~~$ \\
\hline
$\emptyset$ & $0.27$ & $0.27$ & $1$ \\
$\{a\}$ & $0.73$ & $0.73$ &  $0$ \\
\hline 
\end{tabular}
\end{table}
\end{example}

Algorithm \ref{alg:solver} provides a framework to simplify and solve \lpmln programs based on the notions of semi-valid and valid \lpmln rules. 
Firstly, simplify an \lpmln program $M$ by removing all semi-valid and valid rules (line 2 - 8). 
Then, compute the stable models of the simplified \lpmln program via using some existing \lpmln solvers, such as LPMLN2ASP, LPMLN2MLN \cite{Lee2017ComputingLpmln}, and LPMLN-Models \cite{Wu2018LPMLNModels} ect. 
Finally, compute the probability degrees of the stable models w.r.t. the simplified program and all semi-valid rules (line 9 - 12). 
The correctness of the algorithm can be proved by corresponding definitions.

\begin{algorithm}
\caption{Simplify and Solve \lpmln Programs}
\KwIn{an \lpmln program $M$}
\KwOut{stable models of $M$ and their probability degrees}
\label{alg:solver}
$S = \emptyset$, $M' = M$ \;
\ForEach{$w:r \in M$}{ 
	\eIf{$w:r$ is valid}{
		$M' = M' -\{w:r\}$ \;
	}{
		\If{$w:r$ is semi-valid}{
			$S = S \cup \{w:r\}$ \;
			$M' = M' -\{w:r\}$ \;
		}
	}
}

$SM(M) = SM(M') = call\text{-}lpmln\text{-}solver(M')$\;

\ForEach{$X \in SM(M)$}{
	$W'(M, X) = exp\left(\sum_{w:r \in M' \cup S \text{ and } X \models w:r} ~w \right)$ \;
}

Compute probability degrees for each stable model $X$ by Equation \eqref{eq:probability-sm} and $W'(M, X)$\;
\Return{$SM(M)$ and corresponding probability degrees}
\end{algorithm}

\begin{table}
\centering
\caption{Syntactic Conditions}
\label{tab:syntactic-conditions}
\begin{tabular}{ccc}
\hline
Name & Definition & Strong Equivalence \\
\hline
TAUT & $h(r) \cap b^+(r) \neq \emptyset$ & p, semi\\
CONTRA & $b^+(r) \cap b^-(r) \neq \emptyset$ & p, semi\\
CONSTR1 & $h(r) = \emptyset$ & semi \\
CONSTR2 & $h(r) \subseteq b^-(r)$ &  semi \\
CONSTR3 & $~~~~h(r) = \emptyset$, $b^+(r) = \emptyset$, and $b^-(r) = \emptyset~~~~$ & p, semi \\
\hline
\end{tabular}
\end{table}

In Algorithm \ref{alg:solver}, a crucial problem is to decide whether an \lpmln rule is valid or semi-valid. 
Theoretically, it can be done by checking the SE-models of a rule, 
however, the approach is highly complex in computation. 
Therefore, we investigate the syntactic conditions for the problem. 
Table \ref{tab:syntactic-conditions} shows five syntactic conditions for a rule $r$, 
where TAUT and CONTRA have been introduced to investigate the program simplification of ASP \cite{Osorio2001Equivalence,Eiter2004SimplifyingLP}, 
CONSTR1 means the rule $r$ is a constraint, and CONSTR3 is a special case of CONSTR1. 
Rules satisfying CONSTR2 is usually used to eliminate constraints in ASP, 
for example, rule ``$\leftarrow a. $'' is equivalent to rule ``$p \leftarrow a, ~not ~p.$'', if the atom $p$ does not occur in other rules. 
Based on these conditions, we present the characterization of semi-valid and valid \lpmln rules.

\begin{theorem}
\label{thm:semi-valid}
An \lpmln rule $w:r$ is semi-valid, iff the rule satisfies one of TAUT, CONTRA, CONSTR1 and CONSTR2. 
\end{theorem}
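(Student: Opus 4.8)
The plan is to reduce semi-validity of $w:r$ to a condition on SE-models and then translate that condition into the four syntactic patterns. By Definition \ref{def:valid-rule}, $w:r$ is semi-valid iff $\{w:r\}$ is semi-strongly equivalent to $\emptyset$, so by Lemma \ref{lem:lpmln-sm-strong-equiv} this holds iff $LSE(\{w:r\}) = LSE(\emptyset)$. Since $\lglred{\emptyset}{Y} = \emptyset$ for every $Y$, every SE-interpretation is an SE-model of $\emptyset$, so $LSE(\emptyset)$ is exactly the set of all SE-interpretations. Hence $w:r$ is semi-valid iff every SE-interpretation is an SE-model of $\{w:r\}$. By Proposition \ref{prop:XX-lse-model-LM} the total interpretations are automatic, so the whole problem collapses to the requirement that \emph{every non-total $(X,Y)$ satisfies $X \models \lglred{\{w:r\}}{Y}$.}

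Next I would simplify $\lglred{\{w:r\}}{Y}$ by a case split on $Y$. If $Y \not\models r$ or $b^-(r) \cap Y \neq \emptyset$, then the \lpmln reduct or the GL-reduct deletes the rule, so $\lglred{\{w:r\}}{Y} = \emptyset$ and $X$ satisfies it trivially. The only ``critical'' case is $Y \models r$ together with $b^-(r) \cap Y = \emptyset$, where $\lglred{\{w:r\}}{Y} = \{h(r) \leftarrow b^+(r)\}$. Thus semi-validity is equivalent to the clean statement: for every critical $Y$ and every $X \subseteq Y$, whenever $b^+(r) \subseteq X$ we have $h(r) \cap X \neq \emptyset$.

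For the if direction I would check the four conditions against this critical case one at a time. TAUT provides $l \in h(r) \cap b^+(r)$, so $b^+(r) \subseteq X$ forces $l \in h(r) \cap X$ and the rule is satisfied. For CONTRA, CONSTR1 and CONSTR2 I would argue instead that the positive body can never be fully contained in $X$: in the critical case $b^-(r) \cap Y = \emptyset$ combined with $Y \models r$ and the relevant disjointness or containment hypothesis ($b^+(r)\cap b^-(r)=\emptyset$, $h(r)=\emptyset$, or $h(r)\subseteq b^-(r)$, respectively) forces $b^+(r) \not\subseteq Y \supseteq X$, so the implication holds vacuously.

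The harder direction, and the main obstacle, is the only-if direction: assuming $r$ violates all four patterns, I must exhibit a single non-total SE-interpretation that fails. Negating the conditions yields $h(r) \neq \emptyset$, a literal $l^* \in h(r) \setminus b^-(r)$, together with $h(r) \cap b^+(r) = \emptyset$ and $b^+(r) \cap b^-(r) = \emptyset$. The intended witness is $X = b^+(r)$ and $Y = b^+(r) \cup \{l^*\}$: then $b^-(r) \cap Y = \emptyset$ and $l^* \in h(r) \cap Y$ place $(X,Y)$ in the critical case, while $h(r) \cap X = h(r) \cap b^+(r) = \emptyset$ shows $X \not\models h(r) \leftarrow b^+(r)$, so $(X,Y) \notin LSE(\{w:r\})$ whereas $(X,Y) \in LSE(\emptyset)$. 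The delicate point is verifying that $X$ and $Y$ are genuine consistent SE-interpretations with $X \subsetneq Y$: properness needs $l^* \notin b^+(r)$, which the failure of TAUT supplies, and consistency of $Y$ needs the complement of $l^*$ not to lie in $b^+(r)$. I expect most of the care to concentrate on this witness construction and its consistency check, and it may be cleanest to carry it out under the standing assumption that rule bodies are consistent, so that the construction always succeeds.
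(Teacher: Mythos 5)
Your reduction is exactly the route the paper intends: the paper's ``proof'' of this theorem is a one-line appeal to Lemma \ref{lem:lpmln-sm-strong-equiv} and Theorem \ref{thm:lpmln-strong-equivalence-w}, and you have filled in precisely the details it omits --- $LSE(\emptyset)$ is the set of all SE-interpretations, total pairs are free by Proposition \ref{prop:XX-lse-model-LM}, and everything collapses to checking $X \models h(r) \leftarrow b^+(r)$ in the critical case where $Y \models r$ and $b^-(r) \cap Y = \emptyset$. Your if-direction case analysis (TAUT gives a head literal in $X$; CONTRA, CONSTR1, CONSTR2 each make $b^+(r) \subseteq Y$ impossible in the critical case) is correct, and so is the witness $X = b^+(r)$, $Y = b^+(r) \cup \{l^*\}$ for the only-if direction, \emph{provided the rule is built from atoms}.

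The consistency caveat you flag at the end, however, is a genuine gap, and your proposed fix (a standing assumption that rule bodies are consistent) does not close it. Take $r$ to be $a \leftarrow \neg a$, i.e.\ $h(r)=\{a\}$, $b^+(r)=\{\neg a\}$, $b^-(r)=\emptyset$. Its body is consistent and it violates all four of TAUT, CONTRA, CONSTR1, CONSTR2, yet it \emph{is} semi-valid: for any consistent $Y$, either $\neg a \in Y$, in which case $a \notin Y$, so $Y \not\models r$ and the \lpmln reduct deletes the rule; or $\neg a \notin Y$, in which case no $X \subseteq Y$ contains the body and the reduct $\{a \leftarrow \neg a\}$ is satisfied vacuously. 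So every SE-interpretation is an SE-model of $\{w:r\}$, and by Lemma \ref{lem:lpmln-sm-strong-equiv} the rule is semi-strongly equivalent to $\emptyset$. This shows the failure of your witness is not a removable technicality: here the only admissible choice $l^* = a$ has its complement inside $b^+(r)$, so $Y$ is forced to be inconsistent, and no alternative witness exists because the theorem itself is false over literals with classical negation. What your construction actually proves is: the rule fails to be semi-valid iff $b^+(r)$ is consistent, TAUT and CONTRA fail, and some $l^* \in h(r) \setminus b^-(r)$ has its complement outside $b^+(r)$. Over atoms this is exactly the negation of ``TAUT or CONTRA or CONSTR1 or CONSTR2'', so your proof is complete under that (presumably tacit) restriction of the paper; for the literal-level statement one must either restrict the syntax or add the missing conditions (inconsistent positive body, or every $l \in h(r)\setminus b^-(r)$ having its complement in $b^+(r)$) to the list of semi-validity criteria.
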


\begin{theorem}
\label{thm:valid}
An \lpmln rule $w:r$ is valid, iff one of following condition is satisfied 
\begin{itemize}
	\item[-] rule $w:r$ satisfies one of TAUT, CONTRA, and CONSTR3; or 
	\item[-] rule $w:r$ satisfies CONSTR1 or CONSTR2, and $w = 0$.
\end{itemize}
\end{theorem}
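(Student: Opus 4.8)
The plan is to derive both directions at once by specializing the general characterization to the one-rule program. Writing $L=\{w:r\}$ and $M=\emptyset$, the rule $w:r$ is valid exactly when $\{w:r\}\equiv_{s,p}\emptyset$, so by Theorem~\ref{thm:lpmln-strong-equivalence-w}(i) it is valid iff $LSE(\{w:r\})=LSE(\emptyset)$ and there are constants $c,k$ with $W(\{w:r\},(X,Y))=exp(c+k*\alpha)\cdot W(\emptyset,(X,Y))$ for every SE-model $(X,Y)$. Since $\lglred{\emptyset}{Y}=\emptyset$, every SE-interpretation is an SE-model of $\emptyset$ and $W(\emptyset,(X,Y))=exp(0)=1$. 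Hence the first conjunct $LSE(\{w:r\})=LSE(\emptyset)$ is precisely the statement that $w:r$ is semi-valid, which by Theorem~\ref{thm:semi-valid} amounts to $r$ satisfying one of TAUT, CONTRA, CONSTR1, CONSTR2. So throughout I may assume $r$ is semi-valid and only analyze the weight condition; in particular every valid rule is automatically one of the four types, which already settles the ``only if'' skeleton.

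Next I would reduce the weight condition to a constancy statement. Because $L$ is a single rule, its \lpmln reduct $\{w:r\}_Y$ equals $\{w:r\}$ when $Y\models r$ and $\emptyset$ otherwise, so $W(\{w:r\},(X,Y))=e^{w}$ if $Y\models r$ and $=1$ if $Y\not\models r$. Thus the weight condition holds iff the map $(X,Y)\mapsto W(\{w:r\},(X,Y))$ is a single value of the form $exp(c+k*\alpha)$ over all SE-interpretations. Using the ASP notion of satisfaction, I record the key fact that $Y\not\models r$ holds iff $b^+(r)\subseteq Y$, $b^-(r)\cap Y=\emptyset$, and $h(r)\cap Y=\emptyset$; the whole argument then turns on determining, for each semi-valid class, whether the set $\{Y\mid Y\models r\}$ is everything, nothing, or strictly in between.

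The core is a three-way case split on the semi-valid class. If $r$ satisfies TAUT or CONTRA, the three violation requirements are jointly unsatisfiable (a literal common to $h(r)$ and $b^+(r)$, respectively to $b^+(r)$ and $b^-(r)$, cannot be both inside and outside $Y$), so every $Y\models r$; then $W\equiv e^{w}$ is constant, realized by $c=w,k=0$ for real $w$ and by $c=0,k=1$ for $w=\alpha$, so the rule is valid for every weight, giving the first bullet. If $r$ satisfies CONSTR3 it is the empty constraint, whose violation conditions are vacuously true for every $Y$, so every $Y\not\models r$, $W\equiv 1$, and the rule is valid for every weight with $c=k=0$, again the first bullet. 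Otherwise $r$ satisfies CONSTR1 or CONSTR2 but none of TAUT, CONTRA, CONSTR3, and here I exhibit two witnesses: $Y_2=b^+(r)$ is consistent and violates $r$ (using $b^+(r)\cap b^-(r)=\emptyset$ from not-CONTRA, and $h(r)\cap Y_2=\emptyset$, which is automatic under CONSTR1 and follows from $h(r)\subseteq b^-(r)$ under CONSTR2), while some $Y_1\models r$ also exists (take $Y_1$ meeting $b^-(r)$ when $b^-(r)\neq\emptyset$, and $Y_1=\emptyset$, which fails to contain the then-nonempty $b^+(r)$, otherwise). Consequently $W$ attains both $e^{w}$ and $1$, so constancy forces $e^{w}=1$, i.e.\ $w=0$ (and $w\neq\alpha$, since $e^{\alpha}\neq 1$), whereas $w=0$ collapses both values to $1$ and makes $c=k=0$ work; thus such a rule is valid iff $w=0$, the second bullet.

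Collecting the three cases yields the claimed equivalence. I expect the main obstacle to be the bookkeeping in the third paragraph: proving that among semi-valid rules the always-satisfied ones are exactly those satisfying TAUT or CONTRA and that the unique always-violated one is CONSTR3, and then constructing the witnesses $Y_1,Y_2$ robustly across the degenerate subcases (empty $b^+(r)$, the inclusion that CONSTR1 implies CONSTR2, and the standing assumption that rule bodies are consistent). Everything else is a direct application of Theorem~\ref{thm:lpmln-strong-equivalence-w}(i) and Theorem~\ref{thm:semi-valid}.
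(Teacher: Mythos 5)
Your proposal is correct and follows essentially the same approach as the paper: the paper's entire proof of this theorem is the remark that it ``can be proven by Lemma~\ref{lem:lpmln-sm-strong-equiv} and Theorem~\ref{thm:lpmln-strong-equivalence-w}'', and your argument is exactly that plan carried out in full --- specialize Theorem~\ref{thm:lpmln-strong-equivalence-w}(i) to $M=\emptyset$, identify the SE-model equality with semi-validity via Theorem~\ref{thm:semi-valid}, and settle the weight condition by the split into TAUT/CONTRA (always satisfied, weight constantly $e^w$), CONSTR3 (always violated, weight constantly $1$), and the remaining CONSTR1/CONSTR2 rules (both values $e^w$ and $1$ occur as weights of total SE-models, forcing $w=0$). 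The one caveat is the ``standing assumption that rule bodies are consistent'' you invoke to make $Y_2=b^+(r)$ a legitimate interpretation: the paper states no such assumption, but it is genuinely needed for the theorem itself, since a rule such as $w:\leftarrow a,\neg a$ (inconsistent positive body) satisfies only CONSTR1 yet is satisfied by every consistent interpretation and is therefore valid for every $w$ --- so what you flagged is a missing hypothesis in the paper's statement rather than a defect in your argument.
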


Theorem \ref{thm:semi-valid} and Theorem \ref{thm:valid} can be proven by Lemma \ref{lem:lpmln-sm-strong-equiv} and Theorem \ref{thm:lpmln-strong-equivalence-w}. 
It is worthy noting that conditions CONSTR1 and CONSTR2 means the only effect of constraints in \lpmln is to change the probability distribution of inference results, 
which can also be observed in Example \ref{ex:flattening-extension}. 
In this sense, the constraints in \lpmln can be regarded as the weak constraints in ASP, 
and Algorithm \ref{alg:solver} is similar to the algorithm of solving ASP containing weak constraints. 
In both of algorithms, stable models are computed by removing (weak) constraints, 
and the certainty evaluations of the stable models are computed by combining these constraints. 

Combining Theorem \ref{thm:semi-valid} and Theorem \ref{thm:valid}, 
Algorithm \ref{alg:solver} is an alternative approach to enhance \lpmln solvers. 
In addition, Theorem \ref{thm:semi-valid} and Theorem \ref{thm:valid} also contribute to the field of knowledge acquiring. 
On the one hand, although it is impossible that rules of the form TAUT, CONTRA, and CONSTR3 are constructed by a skillful knowledge engineer, 
these rules may be obtained from data via rule learning. 
Therefore, we can use TAUT, CONTRA, and CONSTR3 as the heuristic information to improve the results of rule learning. 
On the other hand, CONSTR1 and CONSTR2 imply a kind of methodology of problem modeling in \lpmlnend, 
that is,  we can encode objects and relations by \lpmln rules and facts, 
and adjust the certainty degrees of inference results by \lpmln constraints. 
In fact, this is the core idea of ASP with weak constraints, \lpmln is more flexible by contrast, 
since \lpmln provides weak facts and rules besides weak constraints.


\section{Conclusion and Future Work}
In this paper, we present four kinds of notions of strong equivalences between \lpmln programs by comparing the certainty degrees of stable models in different ways, i.e. semi-strong, w-strong and p-strong equivalences, 
where w-strong equivalence is the strongest notion, and semi-strong equivalence is the weakest notion. 
For each notion, we present a sufficient and necessary condition to characterize it, 
which can be viewed as a generalization of SE-model approach in ASP. 
After that, we present a sufficient and necessary condition that guarantees the strong equivalence between a single \lpmln rule and the empty set, 
and we present an algorithm to simplify and solve \lpmln programs by using the condition. 
The condition can also be used to improve the knowledge acquiring and increase the understanding of the methodology of problems modeling in \lpmlnend. 

As we showed in the paper, there is a close relationship between \lpmln and ASP, 
especially, the constraints in \lpmln can be regarded as the weak constraints in ASP. 
Concerning related work, the strong equivalence for ASP programs with weak constraints (abbreviated to \aspwcend) has been investigated  \cite{Eiter2007ASPbpa}. 
It is easy to observe that the strong equivalence and corresponding characterizations of \aspwc can be viewed as a special case of the p-strong equivalence in ASP. 

For the future, we plan to improve the equivalences checking in the paper, and use these technologies to enhance \lpmln solvers. 
And we also plan to extend the strong equivalence discovering method introduced in \cite{Lin2005Discover} to \lpmlnend, which would help us to decide strong equivalence via some syntactic conditions. 


\section{Acknowledgments}
We are grateful to the anonymous referees for their useful comments.
The work was supported by the National Key Research and Development Plan of China (Grant No.2017YFB1002801).

\bibliographystyle{eptcs}
\bibliography{lpmln-equivalence-iclp}

\end{document}